\newtheorem{theorem}{Theorem}
\def\bbN{\mathbb{N}}
\def\bbR{\mathbb{R}}
\def\ra{\rightarrow}
\def\bbN{{\mathbb N}}
\def\bbR{{\mathbb R}}
\def\bbZ{{\mathbb Z}}
\def\ra{\rightarrow}
\begin{document}

\title{Coordination Sequences of Periodic Structures are Rational via Automata Theory}
\author{Eryk Kopczyński \\ { \small Institute of Informatics, University of Warsaw}}
\date{February 18, 2022}

\maketitle

\begin{abstract}
We prove the conjecture of Grosse-Kunstleve et al. that coordination sequences of periodic structures in
$n$-dimensional Euclidean space are rational. This has been recently proven by Nakamura et al.;
however, our proof is a straightforward application of classic techniques from automata theory.
\end{abstract}

Keywords: coordination sequences, periodic structures, rational generating functions

\section{Introduction}
A periodic graph is a graph $(V,E)$ with a free action of $\bbZ^d$ on $V$, such that $V/\bbZ^d$ 
is finite, and such that for every $\{u,w\}\in E$ and $g \in \bbZ^d$, we have $\{gu,gw\} \in E$.
A coordination sequence of such a graph and vertex $v_0 \in V$ is $(c_0, c_1, c_2, \ldots)$ such
that $c_i$ is the number of vertices $v \in V$ such that the length of the shortest path from
$v_0$ to $v$ equals $i$.

From the paper by Chaim Goodman-Strauss and Neil Sloane \citeyear{colorbook} we have learned about the conjecture of \citet{conjcoord} that the
generating functions of coordination sequences are rational, which was open even in dimension 2. The generating function of a sequence 
$(c_0, c_1, c_2, \ldots)$ is defined as $q(z) = \sum_{z \geq 0} c_i z^i$; see e.g. \cite{concretemath} for an introduction to generating functions.
The following conditions are equivalent for polynomially bounded sequences:
\begin{itemize}
\item the generating function $q(z)$ is rational, i.e., $q(z)=p_1(z)/p_2(z)$ where $p_1$ and $p_2$ are polynomials,
\item the sequence $c_k$ can be computed using linear recursion: there exist $b_1, \ldots, b_t \in \bbR$ and $k_0$ such that for $k>k_0$ 
we have $c_k = \sum_{i=1}^t b_i c_{k-i}$,
\item the sequence $c_k$ is quasi-polynomial, i.e., we have $c_k = \sum a_i(k) k^i$, where $a_i(k)$ is a periodic function.
\end{itemize}

In this paper, we prove that this conjecture is a rather simple consequence of known techniques in automata theory.
The paper by \citet{colorbook} studies coordination sequences in specific two-dimensional cases, and 
contains a beautiful bibliography on coordination sequences.
Coordination sequences have been found in specific cases by \citet{okeeffe}, \citet{eon2002algebraic}, \citet{coordplane}, \citet{coordplane2}.
This conjecture has been recently proven by \citet{coordany}, however, we believe that our simple proof is still insightful,
as it shows another strong connection between apparently unrelated fields of crystallography and automata theory.
In particular, we believe that recent research in the computational aspects of Parikh images \cite{parikhlics,taming} may prove useful in crystallography.
Previously, strong connections were known between automata theory and periodic structures in hyperbolic geometry \cite{wpigroups}.
In Euclidean geometries, translations are commutative, so (in the 2D case) any translation by a vector in $\bbZ^2$ can be written as $a^n b^m$ where $a$ and $b$
are two translations corresponding to the periods. In the case of hyperbolic geometry, we also can use a finite number of generators, but this time, the order matters, so the structure of
a similar representation (word) is more complex. Formal language theory, and specifically the theory of finite automata, provides a framework to deal with such representations.
The tools for the Euclidean case are also provided by automata theory:
the Parikh image of a set of words is a subset of $\bbZ^d$ which 
contains the information about the possible numbers of occurences of every generator, abstracting from their order.

Our proof is based on the fact that, given a periodic graph, we can create a finite automaton
$A$ such that the Parikh image of $A$ is the set of all $(x_1,\ldots,x_n,z)$ such that there is a 
vertex $v$ of $V$ at coordinates $(x_1,\ldots,x_n)$ and that the length of the shortest path from $v_0$
to $v$ is at most $z$. By the Parikh theorem \cite{parikh}, this set is semilinear. It is known
that every semilinear set has an unambiguous representation as a union of disjoint unambiguous linear sets 
\cite{unambeilen,unambito,taming}. If $S \subseteq \bbZ^d$ is an unambiguous linear set, the generating function
$\sum_{(x_1,\ldots,x_n,d)\in S} z^d$ is rational, therefore, the generating function of every semilinear set is
rational.

The rest of the paper provides the necessary definitions and details of the proof.

\section{Semilinear sets}

Let $\bbN = \{0, 1, 2, 3, \ldots\}$.
A {\bf linear set} is a set $A \subset \bbZ^d$ of form $L(b,a_1,\ldots, a_k) = \{b + a_1 n_1 + a_2 n_2 + \ldots + a_k n_k: n_1, \ldots, n_k \in \bbN\}$.
A linear set $L(b,a_1,\ldots, a_k)$ is {\bf unambiguous} iff every $x \in L(b,a_1,\ldots,a_k)$ can be written in the
form $b + a_1 n_1 + a_2 n_2 + \ldots + a_k n_k$ in exactly one way.

A {\bf semilinear set} $A \subseteq \bbZ^d$ is a union of linear sets.

{\bf Example.} Consider the set $A : \{(a,b) \in \bbN^2 : a=b=0 \vee (a,b\geq 2 \wedge 2|a+b\}$. This set is a semilinear set, since it is a union of
$\{(0,0)\} = L((0,0))$ and $A_2 = L((2,2), (2,0), (1,1), (0,2))$. The set $A_2$ is not unambiguous, since e.g. $(4,4)$ can be represented in two ways, as
$(2,2)+(2,0)+(0,2)$ or $(2,2)+(1,1)+(1,1)$. However, $A_2$ can be written as a union of two unambiguous linear sets, for example, $A_2 = L((2,2), (2,0), (0,2)) \cup
L((3,3), (2,0), (0,2))$. Therefore, $A$ is a disjoint union of three unambiguous linear sets.

See \cite{hopcroft79,kozen97} for an introduction to automata theory.
A (non-deterministic) {\bf finite automaton} (NFA) over $\bbZ^d$ is a tuple $A = \{Q, I, F, \delta\}$, where $Q$ is a finite set of states,
$I \subseteq Q$ is the set of initial states, $F \subseteq Q$ is the set of final states, and $\delta \subseteq Q \times \bbZ^d \times Q$
is the (finite) set of transitions. For $(s,a,t) = d \in \delta$, we call $s=s(d)$ the source state of $d$, $t=t(d)$ the target state
of $d$, and $a=a(d)$ the output of $d$.

A {\bf run} of the finite automaton $A$ of length $k>0$ is a sequence $r = (r_1, \ldots, r_k) \in \delta^k$ (for some $k$) such that $s(r_1) \in I$, $s(r_{i+1}) = t(r_i)$
for $i=1, \ldots, k-1$, and $t(r_k) \in F$. An empty run (of length $k=0$) is also allowed if $I \cap F \neq \emptyset$
(intuitively, such runs start in an initial state and end in the same state without taking any transitions).
The Parikh image of a run $r$, denoted $\Psi(r)$, is $\sum_{i=1}^k a(r_i)$. The Parikh image of an automaton
$A$, denoted $\Psi(A)$, is the set of Parikh images of all its runs. We use the following fact:

\begin{theorem}[Parikh's Theorem, \cite{parikh}]
The Parikh image of a NFA is a semilinear set.
\end{theorem}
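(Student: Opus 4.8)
The plan is to deduce the statement from Kleene's theorem together with a few closure properties of semilinear sets, by structural induction on a regular expression. First I would regard each transition $\tau \in \delta$ as a single letter: the runs of $A$, read off as their sequences of transitions, form a regular language $R \subseteq \delta^*$, namely the language of the finite automaton with state set $Q$, initial states $I$, final states $F$, and a transition from $s$ to $t$ on the letter $\tau = (s,a,t)$. By Kleene's theorem $R$ is denoted by some regular expression $E$ over the alphabet $\delta$, built from $\emptyset$, $\epsilon$ and single letters using union, concatenation and Kleene star. For a regular expression $E$ write $\Psi(E) = \{\sum_{i=1}^k a(\tau_i) : \tau_1 \cdots \tau_k \in L(E)\} \subseteq \bbZ^d$; then $\Psi(A) = \Psi(E)$, so it suffices to prove that $\Psi(E)$ is semilinear for every $E$.

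The base cases are immediate: $\Psi(\emptyset) = \emptyset$ is the empty union of linear sets, $\Psi(\epsilon) = \{0\} = L(0)$, and $\Psi(\tau) = \{a(\tau)\} = L(a(\tau))$. The union case is nothing but the definition of semilinearity, since $\Psi(E_1 \cup E_2) = \Psi(E_1) \cup \Psi(E_2)$. For concatenation, $\Psi(E_1 E_2) = \Psi(E_1) + \Psi(E_2)$, the Minkowski sum, because the letters of a concatenated word split into the two factors; so I must show that semilinear sets are closed under Minkowski sums. This follows by distributing the sum over the finitely many linear components, $(\bigcup_i P_i) + (\bigcup_j Q_j) = \bigcup_{i,j} (P_i + Q_j)$, together with the identity $L(b, a_1, \ldots, a_k) + L(b', a'_1, \ldots, a'_l) = L(b + b', a_1, \ldots, a_k, a'_1, \ldots, a'_l)$, which shows that a sum of two linear sets is linear.

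The one step requiring genuine care is the Kleene star, where $\Psi(E^*)$ is the submonoid of $(\bbZ^d, +)$ generated by $\Psi(E)$, i.e. the set of all finite sums of its elements. I would first compute the submonoid generated by a single linear set: a sum of $m \geq 1$ elements of $L(b, a_1, \ldots, a_k)$ has the form $m b + \sum_l a_l n_l$, and substituting $m = 1 + n_0$ shows this submonoid is exactly $\{0\} \cup L(b, b, a_1, \ldots, a_k)$, which is semilinear. Then, writing $\Psi(E) = \bigcup_i M_i$ and using commutativity of $\bbZ^d$, the submonoid generated by the union regroups as the Minkowski sum of the submonoids generated by the individual components $M_i$; this is legitimate precisely because each such submonoid contains $0$, so selecting no element from a given $M_i$ is permitted. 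Combining this regrouping with the per-component computation and the already established closure under Minkowski sums shows $\Psi(E^*)$ is semilinear, completing the induction. I expect this regrouping to be the main obstacle: it is where commutativity of the ambient group is indispensable, and where one must argue that the decomposition over components is exact, neither dropping the empty selection nor double-counting a sum that could be attributed to several components.
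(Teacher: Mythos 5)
Your argument is correct, but note that the paper does not prove this statement at all: it cites \cite{parikh} and only adds a remark reducing the $\bbZ^d$-labelled version to the classical finite-alphabet version via the linear map $\phi(n_1,\ldots,n_m)=\sum_i a_i n_i$, using the fact that linear maps preserve semilinearity. What you supply instead is a complete, self-contained proof of the regular-language case: Kleene's theorem turns the run language into a regular expression over the transition alphabet, and the induction succeeds because the Parikh map is a homomorphism into the commutative monoid $(\bbZ^d,+)$ --- union is definitional, concatenation becomes Minkowski sum (semilinearity is preserved by distributing over components and concatenating period lists), and star becomes the generated submonoid, which you compute componentwise as $\{0\}\cup L(b,b,a_1,\ldots,a_k)$ and reassemble by regrouping. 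The steps all check out; in particular the regrouping of a sum over $\bigcup_i M_i$ into the Minkowski sum of the submonoids $\langle M_i\rangle$ is an exact set equality (double attribution of a summand is harmless for sets, and the empty selection is covered because each $\langle M_i\rangle$ contains $0$), and this is indeed the one place where commutativity of $\bbZ^d$ is essential. Your route buys two things over the paper's citation: it handles $\bbZ^d$-valued transition outputs directly, making the paper's reduction remark unnecessary, and it keeps the whole development elementary and self-contained. What it does not give is Parikh's theorem in its usual generality for context-free languages, but only the regular case is stated and used here, so that is not a defect. One presentational point: state explicitly that the induction hypothesis delivers a \emph{finite union of linear sets}, since the star step manipulates that representation component by component.
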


Note: in this paper $\delta \subseteq Q \times \bbZ^d \times Q$. The usual convention in automata theory is to take $\delta \subseteq Q \times \Sigma \times Q$, where 
$\Sigma$ is a finite alphabet $\Sigma = \{a_1, \ldots, a_d\}$, and the Parikh
image of a run is defined as a vector $v \in \bbN^d$ such that $v_i$ is the number of transitions labeled with $a_i$.
Our convention generalizes the usual convention -- just take $a_i$ to be the $i$-th unit vector. However, it is straightforward to generalize 
Parikh's Theorem to our setting \cite{parikhlics}. Indeed, our finite automaton can be seen as an automaton over the finite alphabet $\Sigma = \{a_1, \ldots, a_m\}$ where
$a_i$ are the vectors appearing as outputs. By the usual Parikh's theorem, the Parikh's image of $A$, $\Psi_1(M)$, is a semilinear subset of $\bbN^m$. Let $\phi: \bbN^m \ra \bbZ^k$ be
given as $\phi(n_1, \ldots, n_m) = \sum_{i=1,\ldots,m} a_i n_i$. The Parikh image in our sense, $\Psi(M)$, can then be written as $\phi(\Psi(m))$. Since $\phi$ is linear,
it maps semilinear sets to semilinear sets.

Parikh images and semilinear sets are a widely studied topic in automata theory; see \citet{taming} for a recent reference. The following characterization of semilinear sets
will be useful:

\begin{theorem}[\cite{unambeilen,unambito,taming}] \label{unamb}
Every semilinear set can be represented as a union of disjoint unambiguous linear sets.
\end{theorem}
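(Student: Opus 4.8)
The plan is to show that the family $\mathcal{D}$ of sets expressible as a finite \emph{disjoint} union of unambiguous linear sets is exactly the class of semilinear sets. Since unambiguous linear sets are in particular linear, every member of $\mathcal{D}$ is semilinear; and since every semilinear set is by definition a finite union of linear sets, it suffices to prove two things: (a) every single linear set belongs to $\mathcal{D}$, and (b) $\mathcal{D}$ is closed under finite unions. Part (a) removes the ``internal'' ambiguity of a linear set, while part (b) removes the overlaps between different linear sets; combining them by induction on the number of linear sets in a semilinear representation then yields the theorem.

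For part (a), the key observation is that a linear set whose period vectors $a_1, \ldots, a_k$ are linearly independent over $\bbQ$ is automatically unambiguous: if $\sum_i n_i a_i = \sum_i n_i' a_i$ then $\sum_i (n_i - n_i') a_i = 0$, and independence forces $n_i = n_i'$. Hence it suffices to decompose an arbitrary $L(b, a_1, \ldots, a_k)$ into finitely many disjoint linear sets each carrying a linearly independent set of periods. I would do this by induction on $k$. If the $a_i$ are independent we are done; otherwise there is a nontrivial integer relation $\sum_i \lambda_i a_i = 0$, which I would use to bound the number of times one generator, say $a_k$, is used, splitting the parameter space $\bbN^k$ according to whether $n_k$ lies below some threshold $N$ or not. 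The bounded slices $n_k = 0, \ldots, N-1$ each give a linear set on the $k-1$ periods $a_1, \ldots, a_{k-1}$ (with base shifted by $n_k a_k$), to which induction applies; the relation is used to normalise the unbounded part so that it too is covered by sets with fewer effective generators. Indexing the pieces by the exact value of $n_k$ (and by the analogous bookkeeping in the unbounded regime) keeps the union disjoint.

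For part (b) it suffices to add one unambiguous linear set $L$ to a given member $S = U_1 \sqcup \cdots \sqcup U_t$ of $\mathcal{D}$. I would write $S \cup L = S \sqcup (L \setminus S)$ and compute $L \setminus S = L \setminus (U_1 \cup \cdots \cup U_t)$ by subtracting the sets $U_j$ one at a time. Because subtraction never creates new overlaps, disjointness is preserved throughout, so the whole task reduces to the atomic statement that the difference $L \setminus U$ of two unambiguous linear sets again lies in $\mathcal{D}$. I would prove this by the same cone-and-lattice combinatorics as in part (a): carve out of the parameter cone of $L$ the parameters whose image lands in $U$, and stratify the complement into translated cones over linearly independent subsets of the periods. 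Throughout, I would lean on the classical fact (Ginsburg--Spanier) that semilinear sets are effectively closed under intersection, complement, and hence difference, so that every intermediate set I manipulate is guaranteed to be semilinear.

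The main obstacle is the stratification underlying both (a) and (b): turning a linear set with dependent periods, or the ``not in $U$'' part of a cone, into a genuinely \emph{finite} disjoint union of independent-period linear sets with disjoint images. Controlling this rests on the well-ordering of $\bbN^k$ (Dickson's lemma): the set of parameter vectors that are ``reducible'' via the relation lattice is upward closed and hence finitely generated, so the irreducible representatives form a semilinear set admitting a finite stratification. Verifying that this stratification can always be chosen so that the period map $\pi\colon \bbN^k \to \bbZ^d$, $n \mapsto \sum_i n_i a_i$, is injective on each piece, and that the resulting images are pairwise disjoint, is the delicate point that most needs care; the remainder is bookkeeping.
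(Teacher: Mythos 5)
The paper does not actually prove this statement---it is quoted with citations to Eilenberg--Sch\"utzenberger, Ito, and Chistikov--Haase---so your sketch has to stand on its own, and its overall architecture (first reduce to linear sets with $\bbQ$-linearly independent periods, which are automatically unambiguous, then disjointify) is indeed the strategy used in those references. Part (a) is essentially sound as far as it goes: a relation $\sum_i \lambda_i a_i = 0$ yields a terminating rewriting on parameter vectors (the sum of the coordinates with, say, positive $\lambda_i$ strictly decreases), the irreducible parameters are those with some $n_i$ below a threshold, and case-splitting on which coordinate is small and on its exact value gives finitely many pieces with one fewer free generator, to which induction on $k$ applies. But note that this makes the \emph{parameter} sets disjoint, not the \emph{images} in $\bbZ^d$: distinct irreducible parameter vectors in different strata can still map to the same point via relations not generated by the one you chose. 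So (a) really only delivers a finite, possibly overlapping, union of unambiguous linear sets, and the entire burden of disjointness is pushed onto (b).

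Part (b) is where the genuine gap sits. You correctly reduce closure under union to the atomic claim that $L \setminus U \in \mathcal{D}$ for unambiguous linear $L, U$, but your justification of that claim is circular: Ginsburg--Spanier gives only that $L \setminus U$ is semilinear, i.e.\ a finite union of linear sets, and converting \emph{that} into a finite disjoint union of unambiguous linear sets is precisely the theorem you are proving, with no decreasing measure to ground the recursion. The appeal to ``the same cone-and-lattice combinatorics as in part (a)'' does not transfer, because the set of parameters $n \in \bbN^k$ with $b + \sum_i n_i a_i \in U$ is an arbitrary semilinear subset of the cone, not something removable by a single rewriting rule, and Dickson's lemma gives finiteness of a stratification but says nothing about injectivity of the period map on the strata or disjointness of their images. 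What is missing is a well-founded induction parameter for the difference/intersection step---in Ito and in Chistikov--Haase this is an induction on the dimension of the $\bbQ$-span of the periods, using that the intersection of two linear sets with independent periods either lives in lower dimension or forces the period lattices to be commensurable, in which case the difference is cut out by finitely many congruence and coset conditions that can be handled explicitly. Without supplying that induction, the argument does not close.
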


Let $\pi_i: \bbZ^d \rightarrow \bbZ$ be the projection on the $i$-th coordinate. For a set $S \subseteq \bbZ^d$ such
that $\pi_i(s) \geq 0$ for all $s \in S$, let $q_i(S)(z) = \sum_{s\in S} z^{\pi_i(s)}$, that is,
the generating function of the sequence $(a_n)$ where $a_n$ is the number of elements of $S$ such that $\pi_i(s) = a_n$.

\begin{theorem}\label{semirat}
If $S$ is a semilinear set such that $\pi_i(s) \geq 0$ for all $s \in S$, the generating function $q_i(S)$ is rational.
\end{theorem}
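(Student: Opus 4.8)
The plan is to reduce the statement to the case of a single unambiguous linear set, using the two structural results already established. By Theorem~\ref{unamb}, the semilinear set $S$ can be written as a finite disjoint union $S = S_1 \sqcup S_2 \sqcup \cdots \sqcup S_r$ of unambiguous linear sets. Because the union is disjoint, the counting function is additive: the number of elements of $S$ with $\pi_i$-value equal to $n$ is the sum over $j$ of the corresponding counts for each $S_j$. Hence $q_i(S)(z) = \sum_{j=1}^r q_i(S_j)(z)$, and since a finite sum of rational functions is rational, it suffices to prove that $q_i(L)$ is rational for a single unambiguous linear set $L$. (I should note in passing that each $S_j$ still satisfies $\pi_i(s)\geq 0$, since it is a subset of $S$, so the generating function $q_i(S_j)$ is well-defined.)

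For a single unambiguous linear set $L = L(b, a_1, \ldots, a_k)$, I would compute the generating function directly by exploiting the unambiguity. Every $x \in L$ has a \emph{unique} representation $x = b + \sum_{\ell=1}^k a_\ell n_\ell$ with $n_\ell \in \bbN$, so the map $(n_1,\ldots,n_k) \mapsto b + \sum_\ell a_\ell n_\ell$ is a bijection from $\bbN^k$ onto $L$. The $i$-th coordinate of this element is $\pi_i(b) + \sum_{\ell=1}^k \pi_i(a_\ell)\, n_\ell$, a quantity I will abbreviate as $\beta + \sum_\ell \alpha_\ell n_\ell$ where $\beta = \pi_i(b)$ and $\alpha_\ell = \pi_i(a_\ell)$. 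Summing $z^{\pi_i(x)}$ over $x \in L$ therefore becomes a sum over $(n_1,\ldots,n_k)\in\bbN^k$ that factors:
\begin{equation*}
q_i(L)(z) = \sum_{(n_1,\ldots,n_k)\in\bbN^k} z^{\beta + \sum_\ell \alpha_\ell n_\ell} = z^{\beta} \prod_{\ell=1}^k \left(\sum_{n=0}^{\infty} z^{\alpha_\ell n}\right).
\end{equation*}
Each inner factor is a geometric series $\sum_{n\geq 0} (z^{\alpha_\ell})^n = 1/(1 - z^{\alpha_\ell})$, giving $q_i(L)(z) = z^{\beta}\prod_{\ell=1}^k 1/(1-z^{\alpha_\ell})$, which is manifestly rational.

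The one subtlety, and the main point that requires care rather than real difficulty, is the hypothesis $\pi_i(s)\geq 0$ for all $s \in S$ and what it forces on the exponents. The factorization above treats each $z^{\alpha_\ell n}$ formally, but the claim that $q_i(L)$ is a genuine rational \emph{function} (equivalently, that the formal power series converges and represents $z^\beta \prod_\ell 1/(1-z^{\alpha_\ell})$) needs the exponents to be nonnegative so that the series is an honest power series in $z$. Since $\pi_i$ is nonnegative on all of $L$, we must have $\beta \geq 0$; and each $\alpha_\ell \geq 0$ as well, because if some $\alpha_\ell$ were negative then taking $n_\ell \to \infty$ (with the other $n$'s fixed) would drive $\pi_i$ to $-\infty$, contradicting nonnegativity. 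Thus $\beta \geq 0$ and every $\alpha_\ell \geq 0$, the factors $1/(1-z^{\alpha_\ell})$ are well-defined power series (with the convention that a factor with $\alpha_\ell = 0$ contributes a constant, handled by noting such a generator fixes the $i$-th coordinate and only multiplies the count), and the product is a legitimate rational generating function. Assembling the pieces, $q_i(S) = \sum_j q_i(S_j)$ is a finite sum of such rational functions and hence rational, completing the argument.
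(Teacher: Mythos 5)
Your proposal is correct and follows essentially the same route as the paper: decompose $S$ into disjoint unambiguous linear sets via Theorem~\ref{unamb}, sum the resulting generating functions, and evaluate each $q_i(L(b,a_1,\ldots,a_k))$ as $z^{\pi_i(b)}/\prod_j(1-z^{\pi_i(a_j)})$ by exploiting uniqueness of representation to factor the sum into geometric series. You are in fact slightly more careful than the paper, which does not explicitly verify that the exponents $\pi_i(a_j)$ are nonnegative or discuss the degenerate case $\pi_i(a_j)=0$.
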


\begin{proof}
If $S = L(b, a_1, \ldots, a_k)$ is an unambiguous linear set, every element of $S$ can be written as $b+\sum_{j=1..k} n_j a_j$ in exactly one way, so we have 
\begin{eqnarray*}
q_i(S) &=& \sum_{n_1, \ldots, n_k \in \bbN} z^{\pi_i \left(b+\sum_j n_j a_j \right)} = \sum_{n_1, \ldots, n_k \in \bbN} z^{\pi_i(b)} \prod_{j=1..k} z^{\pi_i(n_j a_j)} = \\
&=& z^{\pi_i(b)} \prod_{j=1}^k \sum_{n_j\in\bbN} z^{n_j \pi_i(a_j)} = z^{\pi_i(b)} / \prod_{j=1}^k (1-z^{\pi_i(a_j)}).
\end{eqnarray*}

In general, by Theorem \ref{unamb}, $S$ has an unambiguous representation $S = S_1 \cup \ldots \cup S_k$ where $S_i$ are disjoint
unambiguous linear sets, and thus, $q_i(S) = \sum_j q_i(S_j)$.
\end{proof}

\section{Periodic graphs}
Given the definition of a periodic graph presented in the introduction, let 
$\{v_1, \ldots, v_m\} = V_0 \subseteq V$ be a set containing exactly one element from every orbit of the action of $\bbZ^d$.
We will prove the following:

\begin{theorem}\label{regtoaut}
For $v_0 \in V_0$ and $i \in \{1, \ldots, m\}$, the set $C(v_0,i)$ of $\{(x_1, \ldots, x_d, y) \in \bbZ^d\times \bbN\}$
such that there exists a path from $v_0$ to $(x_1, \ldots, x_d) v_i \in G$ of length at most $y$
is a semilinear set.
\end{theorem}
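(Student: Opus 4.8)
The plan is to build a finite automaton $A$ over $\bbZ^{d+1}$ whose Parikh image is exactly $C(v_0,i)$, and then apply Parikh's Theorem. Because the action of $\bbZ^d$ is free and $V_0$ meets every orbit exactly once, every vertex of $V$ has a unique representation $g\, v_j$ with $g\in\bbZ^d$ and $j\in\{1,\ldots,m\}$; I identify $g$ with the coordinate vector $(x_1,\ldots,x_d)$. I take $Q=\{1,\ldots,m\}$, with the single initial state being the index $i_0$ such that $v_0=v_{i_0}$, and the single final state being $i$. For the transitions I record one graph step together with the induced translation: for each representative $v_j$ and each neighbor $w$ of $v_j$ in $G$, I write $w=h\, v_{j'}$ uniquely and add the transition $(j,(h,1),j')\in Q\times\bbZ^{d+1}\times Q$, where $(h,1)$ denotes the vector $h$ extended by a final coordinate equal to $1$ that will tally the path length. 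Local finiteness of the graph (finitely many orbits of edges) guarantees that $\delta$ is finite, so $A$ is a genuine NFA.

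Next I would verify that runs of $A$ starting in $i_0$ are in bijection with paths of $G$ starting at $v_0$, and that under this bijection the Parikh image of a run records both the endpoint and the length. Given a path $w_0=v_0,w_1,\ldots,w_\ell$, write $w_t=g_t v_{j_t}$ with $g_0=0$ and $j_0=i_0$; translating the edge $\{w_t,w_{t+1}\}$ by $-g_t$ exhibits it as one of the enumerated neighbor edges of $v_{j_t}$, namely the one giving the transition $(j_t,(g_{t+1}-g_t,1),j_{t+1})$, so the path yields a run. Conversely a run determines the $j_t$ and the increments $h_t$, from which I recover $g_{t+1}=g_t+h_t$ and hence a path, by periodicity of $E$. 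The translations telescope: the spatial part of the Parikh image is $\sum_t(g_{t+1}-g_t)=g_\ell$, the coordinate of the endpoint, while the final coordinate equals $\ell$. Hence the Parikh image of the runs ending in orbit $i$ is exactly the set of $(x_1,\ldots,x_d,\ell)$ for which a path of length \emph{exactly} $\ell$ joins $v_0$ to $(x_1,\ldots,x_d)v_i$.

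Finally, to pass from ``exactly $\ell$'' to ``at most $y$'', I would add a self-loop $(i,(0,\ldots,0,1),i)$ at the final state, allowing the length coordinate to be inflated without moving; taking a shortest path to a given endpoint and then iterating this loop produces precisely all pairs $(x,y)$ with $y$ at least the graph distance, i.e.\ exactly $C(v_0,i)$. (The empty run, permitted since $I\cap F\neq\emptyset$ when $i=i_0$, correctly contributes the origin.) Parikh's Theorem then gives that $C(v_0,i)=\Psi(A)$ is semilinear. I expect the main obstacle to be the bookkeeping in the bijection between runs and paths, and in particular checking that the telescoping of translations is valid and that the self-loop reproduces ``at most'' rather than ``exactly''; the finiteness of $\delta$, which rests on local finiteness of the periodic graph, also deserves an explicit mention.
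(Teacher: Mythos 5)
Your proposal is correct and takes essentially the same route as the paper: build an NFA with state set $V_0$ (indexed by orbits), initial state the orbit of $v_0$, final state $v_i$, transitions labelled by the translation part of each edge orbit augmented with a length-counting coordinate equal to $1$, plus a padding self-loop with label $(0,\ldots,0,1)$, and then invoke Parikh's theorem. The only immaterial differences are that you attach the padding self-loop only at the final state while the paper attaches one at every state (both give the same Parikh image), and that you spell out more carefully than the paper the finiteness of $\delta$ and the run--path correspondence.
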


\begin{proof}
We will construct a NFA $A$ such that $\Psi(A) = C(v_0, i)$.

Our set of states will be $Q=V_0$.

Our set of initial states will be $I=\{v_0\}$.

Our set of final states will be $F=\{v_i\}$.

Every orbit of edges contains exactly one edge from some $v_s \in V_0$ to $v'_t \in V$, which can be written as
$(x_1, \ldots, x_d) v_t$ for some $v_t \in V_0$. For every orbit of edges we include a transition
$d$ such that $s(d) = v_s$, $t(d) = v_t$, and $a(d) = \{x_1, \ldots, x_d, 1\}$.

For every $v_s \in V_0$ we also include a transition $d$ such that $s(d) = t(d) = v_s$ and $a(d) = \{0, \ldots, 0, 1\}$.

It is straightforward to show that $(x_1, \ldots, x_d, y) \in \Psi(A)$ iff there is a path from
$v_0$ to $T_1^{x_i} T_2^{x_2} \ldots T_d^{x_d} v_i \in G$ of length at most $y$, which proves the theorem.
\end{proof}

\begin{figure}
\centering
\includegraphics[width=\linewidth]{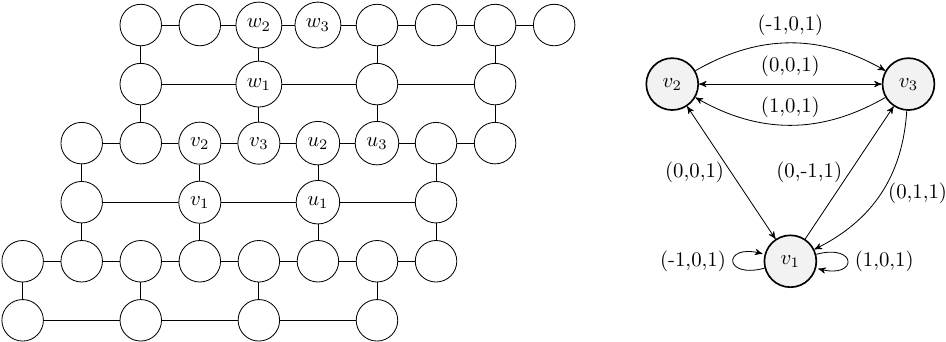}
\caption{{\bf Figure \ref{exfig}.} A periodic graph, and its automaton \label{exfig}}
\end{figure}

See Figure \ref{exfig} for an example periodic graph with $m=3$ orbits, and a diagram of 
the NFA from the proof of 
Theorem \ref{regtoaut}. 
We have $V_0 = \{v_1, v_2, v_3\}$. 
Translations work as follows: $u_i = (1,0)v_i$, $w_i = (0,1)v_i$.
In the diagram in Figure \ref{exfig}, a transition from $v_i$ to $v_j$ with output $a$ is depicted
as an arrow from $v_i$ to $v_j$ labelled $a$.

\newtheorem{corol}[theorem]{Corollary}

\begin{corol}\label{corlab}
The generating function of the coordination sequence of a periodic graph $(V,E)$ starting from $v_0$ is rational.
\end{corol}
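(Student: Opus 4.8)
The plan is to reduce the corollary to Theorems \ref{regtoaut} and \ref{semirat} by reading the coordination sequence off the generating functions of the sets $C(v_0,i)$ along their last coordinate. First I would fix the projection $\pi_{d+1}$ onto the length coordinate $y$, and note that since every element of $C(v_0,i)$ has $y \in \bbN$, the hypothesis $\pi_{d+1}(s)\geq 0$ of Theorem \ref{semirat} is satisfied. By Theorem \ref{regtoaut} each $C(v_0,i)$ is semilinear, so Theorem \ref{semirat} yields that each $q_{d+1}(C(v_0,i))(z) = \sum_{(x_1,\ldots,x_d,y)\in C(v_0,i)} z^{y}$ is a rational function, and hence so is their finite sum $\sum_{i=1}^m q_{d+1}(C(v_0,i))(z)$.

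The crux is to identify this sum with a simple transform of the coordination generating function $q(z)=\sum_k c_k z^k$. For a vertex $v=(x_1,\ldots,x_d)v_i$ reachable from $v_0$, write $\ell(v)$ for the length of the shortest path from $v_0$ to $v$. By the definition of $C(v_0,i)$, the tuple $(x_1,\ldots,x_d,y)$ lies in $C(v_0,i)$ exactly when $y\geq \ell(v)$ (and for unreachable $v$ no such tuple occurs), so the contribution of $v$ to $q_{d+1}(C(v_0,i))$ is the formal geometric series $\sum_{y\geq \ell(v)} z^{y} = z^{\ell(v)}/(1-z)$. Because the action of $\bbZ^d$ is free, distinct coordinate tuples give distinct vertices, and every $v\in V$ equals $(x_1,\ldots,x_d)v_i$ for exactly one orbit index $i$ and one tuple $(x_1,\ldots,x_d)$; summing over $i$ and over all reachable vertices therefore gives
\[
\sum_{i=1}^m q_{d+1}(C(v_0,i))(z) \;=\; \frac{1}{1-z}\sum_{v\in V} z^{\ell(v)} \;=\; \frac{1}{1-z}\sum_{k\geq 0} c_k z^{k} \;=\; \frac{q(z)}{1-z}.
\]

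I would close by observing that the left-hand side is rational by the first paragraph, so $q(z)/(1-z)$ is rational, and multiplying by the polynomial $1-z$ shows that $q(z)$ itself is rational. I expect the only real subtlety to lie in the bookkeeping of the middle step: one must combine the ``at most $y$'' formulation of $C(v_0,i)$ with freeness of the action to guarantee that each vertex is counted exactly once and that the least admissible $y$ equals the shortest-path length, so that the partial-sum factor $1/(1-z)$ emerges cleanly. The rationality conclusion is then immediate.
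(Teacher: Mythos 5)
Your proposal is correct and follows essentially the same route as the paper: apply Theorem \ref{semirat} to the last coordinate of the semilinear sets $C(v_0,i)$, observe that the resulting generating function is the partial-sum transform $q(z)/(1-z)$ of the coordination sequence, and multiply by $1-z$. You simply spell out the bookkeeping (the sum over orbit representatives $i$ and the role of freeness) that the paper's three-sentence proof leaves implicit.
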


\begin{proof}
By Theorem \ref{semirat}, the generating function $q_{d+1}(C(v_0,i))$ is rational. To count only paths of maximal
length exactly $y$, we need to divide this generating function by $1/(1-z)$. Therefore, the
generating function of the coordination sequence of $V$ from $v_0$ is rational. 
\end{proof}

{\bf Acknowledgments.}
This work has been supported by the National Science Centre, Poland, grant UMO-2019//35/B/ST6/04456.
We are grateful to the anonymous referees for their suggestions which improved this paper.

\end{document}